\newtheorem{thm}{Theorem}[section]
\newtheorem{rem}[thm]{Remark}
\newtheorem{exa}[thm]{Example}
\newtheorem{defn}[thm]{Definition}
\numberwithin{equation}{section}
\newcommand{\lcm}{\operatorname{lcm}}
\begin{document}

\begin{frontmatter}



\title{Minimum Distance of New Generalizations of the Punctured Binary Reed-Muller Codes}

\author[label1]{Liqin Hu\corref{mycorrespondingauthor}}
\ead{huliqin@hdu.edu.cn}
\cortext[mycorrespondingauthor]{Corresponding author, Liqin Hu }
\address[label1]{School of Cyberspace, Hangzhou dianzi University , Hangzhou,310018, China}

\author[label2]{Keqin Feng}
\ead{kfeng@math.tsinghua.edu.cn}
\address[label2]{Department of mathematical science, Tsinghua University, Beijing, 100084, China}

\tnotetext[mytitlenote]{The paper is supported by National Natural Science Foundation of China (No. 11471178, 11571107, 61602144)}

\begin{abstract}
Motivated by applications in combinatorial design theory and constructing LCD codes, C. Ding et al \cite{DLX} introduced cyclic codes $\mho(q,m,h)$ and $\bar\mho(q,m,h)$ over $\mathbb{F}_q$ as new generalization and version of the punctured binary Reed-Muller codes. In this paper, we show several new results on minimum distance of $\mho(q,m,h)$ and $\bar\mho(q,m,h)$ which are generalization or improvement of previous results given in \cite{DLX}.
\end{abstract}

\begin{keyword}
Reed-Muller codes, generalized Reed-Muller code, minimum distance.


\end{keyword}

\end{frontmatter}

\section{Introduction}
In 1954, Reed (\cite{R}) and Muller (\cite{M}) constructed independently a kind of binary linear codes (Reed-Muller codes). The punctured binary Reed-Muller codes are cyclic and have been generalized into ones over arbitrary finite fields $\mathbb{F}_q$ (see \cite{DGM, KLP} and others). Such codes and their variants have applications not only in error correcting, but also in secret sharing, data storage systems (locally testable or locally decodable codes) and computational complexity  theory. Recently, C. Ding et al \cite{DLX} present new generalization $\mho(q,m,h)$ and version $\overline{\mho}(q,m,h)$ of the punctured binary Reed-Muller codes motivated by their applications in combinatorial designs and constructing LCD codes (linear code with complement dual).

Let $q$ be a power of a prime number, $m\ge2$, $\alpha$ be a primitive element of the finite field $\mathbb{F}_{q^m}$, which means $\mathbb{F}_{q^m}^*=<\alpha>$. Each nonzero element in $\mathbb{F}_{q^m}$ can be expressed uniquely by $\alpha^a$ with $0\le a\le n-1$ and $n=q^m-1$. For any integer $a$ with $0\le a\le n-1$, let
$$a=a_0+a_1q+\cdots+a_{m-1}q^{m-1}, 0\le a_i\le q-1,$$
be the $q$-adic expansion of $a$. The Hamming $q$-weight $wt_q(a)$ of $a$ is defined by the Hamming weight of vector $(a_0,a_1,\cdots,a_{m-1})\in \mathbb{Z}_q^m$. Namely,
$$wt_q(a)=\sharp\{i:0\le i\le m-1, a_i\ne0\}.$$

\begin{defn}
Let $q$ be a power of a prime number, $m\ge 2$, $1\le h\le m-1$, $n=q^m-1$ and $\mathbb{F}_{q^m}^*=<\alpha>$. The cyclic codes $\mho(q,m,h)$ and $\overline{\mho}(q,m,h)$ over $\mathbb{F}_q$ are ideals of the ring $\mathbb{F}_q[x]/(x^n-1)$ with the generating polynomial
$$g_{q,m,h}(x)=\prod_{\substack{a=1\\wt_q(a)\le h}}^{n-1}(x-\alpha^a)\in \mathbb{F}_q[x]$$
and
$$\overline{g}_{q,m,h}(x)=(x-1)\lcm(g_{q,m,h}(x),\widehat g_{q,m,h}(x))$$
respectively, where $\widehat g_{q,m,h}(x)$ is the reciprocal polynomial of $g_{q,m,h}(x)$.

Namely, let
$$I(q,m,h)=\{1\le a\le n-1: wt_q(a)\le h\}$$
$$I\_(q,m,h)=\{n-a: a\in I(q,m,h)\}.$$
Then the set of zeros of $g_{q,m,h}(x)$ and $\overline g_{q,m,h}(x)$ are
$$Z(q,m,h)=\{\alpha^a:a\in I(q,m,h)\}$$
and
$$\overline Z(q,m,h)=Z(q,m,h)\cup Z\_(q,m,h)\cup \{1\}$$
respectively, where $Z\_(q,m,h)=\{\alpha^{-a}:a\in I(q,m,h)\}=\{\alpha^a:a\in I\_(q,m,h)\}$.
\end{defn}

The length of the codes $\mho(q,m,h)$ and $\overline \mho(q,m,h)$ is $n=q^m-1$. Let $k(q,m,h)$ and $\overline k(q,m,h)$ be the dimension of $\mho(q,m,h)$ and $\overline \mho(q,m,h)$ over $\mathbb{F}_q$. Then $$k(q,m,h)=n-\deg(g_{q,m,h}(x))=n-\sum_{i=1}^{h}(q-1)^i\dbinom{n}{i}.$$
On the other hand, if $h\le [\frac{m-1}{2}]$, then $I(q,m,h)\cap I\_{ }(q,m,h)=\emptyset$, $\overline g_{q,m,h}(x)=(x-1)g_{q,m,h}(x)\widehat{g}_{q,m,h}(x)$ and $\overline k(q,m,h)=n-1-2\sum_{i=1}^h (q-1)^i\dbinom{n}{i}$. The dimension $\overline k(q,m,h)$ has also determined for $h=[\frac{m+1}{2}]$ (\cite{DLX}, Theorem 26 and 27). In this paper, we focus on the minimum (Hamming) distance $d(q,m,h)$ and $\overline d(q,m,h)$ of the codes $\mho(q,m,h)$ and $\overline\mho(q,m,h)$. The following results have been proved in \cite{DLX}.

\begin{thm}Let $q$ be a power of a prime number, $m\ge2$, $n=q^m-1$ and $1\le h\le m-1$. Then
\begin{description}
  \item[(1)] $\frac{q^{h+1}-1}{q-1}\le d(q,m,h)\le 2q^h-1$ (\cite{DLX}, Theorem 3). Particularly, $d(2,m,h)=2^{h+1}-1$.
  \item[(2)] $\overline d(q,m,h)\ge \frac{2(q^{h+1}-1)}{q-1}$ for $h\le [\frac{m+1}2]$ (\cite{DLX}, Theorem 25-27).
  \item[(3)] $d(3,m,1)=4$ (reaches the lower bound in \cite{BEW}) and $\mho(3,m,1)$ for all $m\ge2$ are distance-optimal (by sphere-packing bound) (\cite{DLX}, Corollary 4).
  \item[(4)] $d(q,m,m-1)=\frac{q^m-1}{q-1}$ (reaches the lower bound in \cite{BEW}) (\cite{DLX}, Theorem 6).
\end{description}

\end{thm}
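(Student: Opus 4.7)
The overall strategy is to use the BCH bound for the lower bounds in (1) and (2), and then exhibit explicit low-weight codewords for the upper bounds and equality cases in (1), (3), (4); distance-optimality in (3) will follow from a sphere-packing check. For (1), set $\delta=(q^{h+1}-1)/(q-1)=1+q+\cdots+q^h$. The first step is to verify that every integer $a$ with $1\le a\le \delta-1$ has $wt_q(a)\le h$: such an $a$ satisfies $a\le q+q^2+\cdots+q^h<q^{h+1}$, so only the $q$-adic digits $a_0,\ldots,a_h$ can be nonzero, and if all of them were nonzero the value would be at least $1+q+\cdots+q^h=\delta$, contradicting $a\le\delta-1$. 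Hence $\alpha,\alpha^2,\ldots,\alpha^{\delta-1}$ are $\delta-1$ consecutive roots of $g_{q,m,h}(x)$, and the BCH bound yields $d(q,m,h)\ge\delta$. When $q=2$ one has $\delta=2^{h+1}-1=2q^h-1$, so any proof of the upper bound $2q^h-1$ immediately forces $d(2,m,h)=2^{h+1}-1$. For the upper bound itself, the plan is to construct a codeword of weight $2q^h-1$, e.g.\ as a sum of two cyclic shifts of a minimum-weight word of the smaller code $\mho(q,m,h-1)$ whose supports overlap in exactly one coordinate.

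For (2), the central observation is that $\{0\}\cup I(q,m,h)\cup I_-(q,m,h)$ contains the entire interval $\{-(\delta-1),\ldots,-1,0,1,\ldots,\delta-1\}$ modulo $n$, providing $2\delta-1$ consecutive powers of $\alpha$ among the zeros of $\overline{g}_{q,m,h}(x)$. The hypothesis $h\le\lfloor(m+1)/2\rfloor$ is exactly what guarantees that these exponents are genuinely distinct modulo $n$ (so the interval does not wrap around), and the BCH bound then delivers $\overline{d}(q,m,h)\ge 2\delta=2(q^{h+1}-1)/(q-1)$.

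For (3), part (1) already traps $d(3,m,1)$ between $4$ and $5$, so it suffices to display a single codeword of weight $4$. Since the zero set of $\mho(3,m,1)$ is the $3$-cyclotomic closure of $\{\alpha,\alpha^3\}$ and thus has cardinality $2m$, finding such a codeword amounts to finding four exponents $i_1,\ldots,i_4$ and coefficients $c_j\in\mathbb{F}_3^*$ with $\sum_j c_j\alpha^{i_j}=\sum_j c_j\alpha^{3i_j}=0$, a system which becomes underdetermined once $m\ge 2$ gives enough room. Distance-optimality then follows from the sphere-packing bound applied to a $[3^m-1,\,3^m-1-2m,\,4]_3$ code. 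For (4), (1) supplies the lower bound, and the matching upper bound requires a codeword of weight $(q^m-1)/(q-1)$; using the dimension count $\dim\mho(q,m,m-1)=(q-1)^m$ (from the binomial expansion $\sum_{i=0}^m(q-1)^i\binom{m}{i}=q^m$) together with the identification of this code with a generalized Reed--Muller code of order $m-1$, the plan is to locate such a word among indicator-type vectors supported on a suitable multiplicative substructure of $\mathbb{F}_{q^m}^*$ and then verify the required character-sum vanishing identities.

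The step I expect to be the main obstacle is the explicit construction for the upper bound in (1) when $q>2$: the BCH arithmetic that drives the lower bounds does not by itself produce such a codeword, and the interplay between the $q$-ary digit weight defining $I(q,m,h)$ and the Hamming weight of a cyclic codeword is considerably less transparent than in the binary case, where lower and upper bounds coincide automatically.
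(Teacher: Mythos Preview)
This theorem is not proved in the present paper at all: it is stated in the introduction purely as a summary of prior results from \cite{DLX}, with explicit pointers to Theorems~3, 6, 25--27 and Corollary~4 of that reference. The paper supplies no argument of its own for any of the four items, so there is nothing here against which to compare your proposal.

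That said, your BCH-bound arguments for the lower bounds in (1) and (2) are the standard route and are correct as sketched. One slip in (3): for $q=3$ the representative set is $R(3,m,1)=\{1,2\}$, so the defining equations for a codeword are $c(\alpha)=c(\alpha^{2})=0$, not $c(\alpha)=c(\alpha^{3})=0$ (indeed $3$ lies in the $3$-cyclotomic class of $1$, so $\alpha^{3}$ is already covered by $c(\alpha)=0$); the system you want is $\sum_j c_j\alpha^{i_j}=\sum_j c_j\alpha^{2i_j}=0$. Your own assessment that the explicit upper-bound construction in (1) for $q>2$ is the genuine obstacle is accurate, and the present paper does not attempt it either---it simply cites \cite{DLX}.
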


In section 2 of this paper, we proved the following new results.

\begin{description}
  \item[(I)] We provide a sufficient condition for a divisor $e$ of $n=q^m-1$ such that $d(q,m,h)\le e$ and $\overline d(q,m,h)\le2e$ (Theorem 2.1). In many cases, we can find such divisor $e<2q^h-1$ so that the upper bound $2q^h-1$ of $d(q,m,h)$ can be improved and an upper bound of $\overline{d}(q,m,h)$ is presented.

      As one of direct consequence of this general result , we have showed that

  \item[(II)] If $q\ge3$ and $h+1|m$, then $d(q,m,h)=\frac{q^{h+1}-1}{q-1}$ and $\overline d(q,m,h)=\frac{2(q^{h+1})}{q-1}$  (Theorem 2.2). Namely, $d(q,m,h)$ and $\overline d(q,m,h)$ reach their lower bounds given by Theorem 1.2 (1) and (2) if $h+1|m$. C. Ding et al \cite{DLX} raised open problem 1: Is it true that $d(q,m,h)=\frac{q^{h+1}-1}{q-1}$? From Theorem 1.2 we know that this is true if $q=2$ and any $m\ge 2$, $1\le h\le m-1$, $(q,m,h)=(3,m,1)$ or $h+1=m$. Theorem 2.2 provide new evidence in $h+1|m$ case.

      In section 3 we give more specific consideration for $h=1$ case. By Theorem 1.2 we know that for $m\ge2$,
      $$q+1\le d(q,m,1)\le 2q-1 \mbox{ ,  } \overline d(q,m,1)\ge 2(q+1)$$

      And by Theorem 2.2 we have $d(q,m,1)=q+1$ and $\overline d(q,m,1)= 2(q+1)$ if $m$ is even. We get the following new result on upper bound of $\overline d(q,m,1)$.

  \item[(III)] For $m\ge4$, $\overline d(2,m,1)=6$ and the codes $\overline \mho(2,m,1)$ are distance-optimal by sphere-packing bound (Theorem 3.1). $\overline d(3,m,1)\le10$ for all $m\ge2$ (Theorem 3.3). For $q\ge4$, $\overline d(q,m,1)\le2(2q-1)$ for sufficient large $m\ge c(q)$ (Theorem 3.2).

  \item[(IV)] As an application of Theorem 2.1, we provide a simple sufficient condition for a divisor $e$ of $n=q^m-1$ such that $d(q,m,1)\le e$ and $\overline d(q,m,1)\le 2e$ (Theorem 3.4). For $7\le q\le32$, we make a table for such $e<2q-1$ so that the upper bounds $d(q,m,1)\le 2q-1$ and $\overline d(q,m,1)\le 2(2q-1)$ are improved.
\end{description}

In last section, we make conclusion and raise some open problems for further research.

At the end of this section, we remark that C. Ding et al \cite{DLX} also introduced the extended code $\widehat{\mho}(q,m,h)$ of $\mho(q,m,h)$ and show that the minimum distance $\widehat{d}(q,m,h)$ is $d(q,m,h)+1$. Thus any result on $d(q,m,h)$ can be shifted to the one on $\widehat d(q,m,h)$ directly.

\section{General Case}
Let $\mathbb{F}_{q^m}^*=<\alpha>$, $m\ge2$, $1\le h\le m-1$, $n=q^m-1$ and
$$I(q,m,h)=\{a:1\le a\le n-1, 1\le wt_q(a)\le h\}$$
$$I\_(q,m,h)=\{n-a: a\in I(q,m,h)\}$$
$$\overline{I}(q,m,h)=I(q,m,h)\cup I\_(q,m,h)\cup\{0\}.$$
Then the set of zeros of the cyclic codes $\mho(q,m,h)$ and $\overline\mho(q,m,h)$ is $\{\alpha^a:a\in I\_(q,m,h)\}$ and $\{\alpha^a:a\in \overline I(q,m,h)\}$ respectively. For $0\le a,b\le n-1$, we call $a$ and $b$ belong to a same $q$-cyclotomic class, if there exists $l\in\mathbb{Z}$ such that $a\equiv q^lb\pmod n$. The set $\{0,1,\cdots,n-1\}$ is divided into $q$-cyclotomic classes. For any $x\in \mathbb{Z}$, we defined $wt_q(x)=wt_q([x]_n)$, where $[x]_n$ is the least non-negative residue of $x$ modulo $n$ ($0\le [x]_n\le n-1$). If $a=a_0+a_1q+\cdots+a_{m-1}q^{m-1}$ is the $q$-adic
expansion of $a$, $0\le a\le n-1=q^m-2$, then $[aq]_n=a_{m-1}+a_0q+\cdots+a_{m-2}q^{m-1}$, so that $wt_q(aq)=wt_q(a) (=\sharp\{i:0\le i\le m-1,a_i\ne0\}\in \mathbb{Z})$. Thus $I(q,m,h)$ is divided into disjoint $q$-cycloomic classes $I_1, I_2, \cdots, I_r$. For each $I_i$, we take $a_i\in I_i$. The set $R(q,m,h)=\{a_1,a_2,\cdots,a_r\}$ is called a representative system of $I_1, I_2, \cdots, I_r$. Usually we take $a_i$ to be the smallest integer in $I_i$. With these notations we know that
$$\mho(q,m,h)=\{c(x)=\sum_{i=0}^{n-1}c_ix^i\in\mathbb{F}_q[x]/(x^n-1):c(\alpha^a)=0 \mbox{ for all }a\in R(q,m,h)\}$$
$$\overline\mho(q,m,h)=\{c(x)=\sum_{i=0}^{n-1}c_ix^i\in\mathbb{F}_q[x]/(x^n-1):c(1)=c(\alpha^a)=c(\alpha^{-a})=0 \mbox{ for all }a\in R(q,m,h)\}$$

After all these preparation, we present the following general result.

\begin{thm}Let $q$ be a power of a prime number, $m\ge2$, $1\le h\le m-1$, and $n=q^m-1$.  For a divisor $e$ of $n$, if $2\le e<n$ and the following condition (*) holds,
 \begin{description}
   \item[(*)] $e\nmid a$ for all $a\in I(q,m,h)$
 \end{description}
 then for any integer $l\ge 1$, $d(q,ml,h)\le e$ and $\overline d(q,ml,h)\le 2e$.
\end{thm}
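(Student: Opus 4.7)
The plan is to exhibit an explicit codeword of Hamming weight $e$ in $\mho(q,ml,h)$ and one of weight $\le 2e$ in $\overline{\mho}(q,ml,h)$. Set $N=q^{ml}-1$; since $e\mid q^m-1\mid N$, the quantity $M=N/e$ is an integer with $M\ge 2$. Fix a primitive element $\alpha$ of $\mathbb F_{q^{ml}}^*$ and let $\gamma=\alpha^M$, an element of order $e$. The base codeword I take is
$$c(x)=\sum_{i=0}^{e-1}x^{iM}\in\mathbb F_q[x]/(x^N-1),$$
of Hamming weight exactly $e$. For any integer $a$ one computes $c(\alpha^a)=\sum_{i=0}^{e-1}\gamma^{ai}$, which vanishes precisely when $\gamma^a\ne 1$, i.e.\ when $e\nmid a$. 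Consequently $c\in\mho(q,ml,h)$ is equivalent to the condition that $e\nmid a$ for every $a\in I(q,ml,h)$, and establishing this will yield $d(q,ml,h)\le e$.

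The crux is to upgrade hypothesis (*) from $I(q,m,h)$ to $I(q,ml,h)$. Given $a\in I(q,ml,h)$ with $q$-adic expansion $a=\sum_{i=0}^{ml-1}a_iq^i$ having at most $h$ nonzero digits, let $R=[a]_{q^m-1}$. Using $q^m\equiv 1\pmod{q^m-1}$ one gets $R\equiv\sum_i a_iq^{i\bmod m}\pmod{q^m-1}$. I then plan to prove two assertions. (a) $wt_q(R)\le h$, via the subadditivity of $wt_q$ on $\mathbb Z/(q^m-1)\mathbb Z$ applied to the $\le h$ summands $a_iq^{i\bmod m}$, each of $q$-weight one; subadditivity itself I verify by splitting into the cases $x+y<q^m-1$ and $x+y\ge q^m-1$, rewriting $x+y-(q^m-1)=x+y+1-q^m$ in the second. (b) $R\ne 0$: otherwise $(q^m-1)\mid a$, but any positive multiple of $q^m-1$ has $q$-weight at least $m$, contradicting $wt_q(a)\le h\le m-1$. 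With (a) and (b), $R\in I(q,m,h)$, so (*) gives $e\nmid R$, and since $a\equiv R\pmod{e}$ (as $e\mid q^m-1$) we conclude $e\nmid a$, completing the bound for $d(q,ml,h)$.

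For $\overline{d}$ I set $\widetilde c(x)=(1-x)c(x)$. Because $M\ge 2$, the supports of $c(x)$ and $x\cdot c(x)$ are the disjoint sets $\{iM\}$ and $\{iM+1\}$, so $\widetilde c\ne 0$ and has weight exactly $2e$. Clearly $\widetilde c(1)=0$, and for every $a\in I(q,ml,h)$, $\widetilde c(\alpha^{\pm a})=(1-\alpha^{\pm a})\,c(\alpha^{\pm a})=0$ since $c(\alpha^{\pm a})=0$ (the condition $e\nmid a$ is symmetric in $a\leftrightarrow -a$). Hence $\widetilde c\in\overline{\mho}(q,ml,h)$ and $\overline d(q,ml,h)\le 2e$. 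The step I anticipate as the main obstacle is (b), namely the classical lower bound $wt_q(k(q^m-1))\ge m$ for all $k\ge 1$; I plan to handle it by cyclically shifting the folded representation so that some residue class $j_0$ is absent from the support, then bounding $\sum_{j\ne j_0}b_jq^j$ via size and digit-sum constraints to contradict $\sum_{j\ne j_0}b_jq^j\ge q^m-1$ whenever fewer than $m$ positions are used.
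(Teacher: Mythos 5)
Your construction is exactly the paper's: the codeword $c(x)=\sum_{i=0}^{e-1}x^{iM}=\frac{x^N-1}{x^{N/e}-1}$, the criterion $c(\alpha^a)=0\iff e\nmid a$, multiplication by $(1-x)$ for $\overline{\mho}$, and the reduction of the zero condition from $I(q,ml,h)$ to $I(q,m,h)$ by folding exponents modulo $m$. Steps (a) (subadditivity of $wt_q$ modulo $q^m-1$, with your two-case carry argument) and the membership/weight computations are fine. The problem is your plan for (b), which you yourself flag as the main obstacle: you want $q^m-1\nmid a$ via the lemma $wt_q(k(q^m-1))\ge m$, proved by shifting so a residue class $j_0$ is empty and then arguing $\sum_{j\ne j_0}b_jq^j<q^m-1$ from ``size and digit-sum constraints.'' That inequality is false in the relevant range. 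After the best cyclic shift (empty class at $j_0=m-1$) the only constraints you have are $\sum_j b_j\le h(q-1)$ and $b_j\ge 0$, so the folded sum can be as large as $h(q-1)q^{m-2}$, which is $\ge q^m-1$ whenever $h(q-1)\ge q^2$ --- e.g.\ already for $q=2$, $h=4$, $m=5$ (four binary digits in one residue class give folded value $4\cdot 2^3=32\ge 31$), and similarly for $q=3$, $h\ge 5$, etc. These parameters are admissible in the theorem (take $l\ge 2$), so the size argument alone cannot produce the contradiction and (b) is left unproved as sketched.

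The lemma you need is true, but it requires a descent, not a one-shot count: fold one top digit at a time, $a\mapsto a-a_iq^i+a_iq^{i-m}$ for some $i\ge m$ with $a_i\ne 0$; this keeps the number positive, preserves its class modulo $q^m-1$, and by subadditivity does not raise the weight above $wt_q(a)$, so a positive multiple of $q^m-1$ descends to a positive multiple below $q^m$, which must be $q^m-1$ itself, of weight $m$. Better still, you can bypass (b) entirely by running this same descent on $a\in I(q,ml,h)$ directly, as the paper does: the terminal value $\widetilde a$ satisfies $1\le\widetilde a\le q^m-1$, $\widetilde a\equiv a\pmod{q^m-1}$, $wt_q(\widetilde a)\le h\le m-1$, and hence $\widetilde a\ne q^m-1$, so $\widetilde a\in I(q,m,h)$ and (*) gives $e\nmid\widetilde a$, hence $e\nmid a$. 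With that repair your argument coincides with the paper's proof.
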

\begin{proof}Let $e\ge2$ be a divisor of $n=q^m-1$ satisfying the condition (*). Let
$$M=ml\mbox{ , }n=ef\mbox{ , }N=q^M-1=eF$$
where $F=f\cdot \frac{q^{ml}-1}{q^m-1}\in \mathbb{Z}$. Consider the following polynomial
$$c(x)=\sum_{i=0}^{N-1}c_ix^i=\frac{x^N-1}{x^F-1}\in\mathbb{F}_q[x]/(x^N-1).$$
We claim that $c(x)\in\mho(q,ml,h)$.

 Let $\mathbb{F}_{q^M}^*=<\alpha>$. Then
\begin{eqnarray*}
   & c(x)\in\mho(q,m,h) & \Leftrightarrow c(\alpha)=0\mbox{ for all }a\in I(q,M,h)=\{a: 1\le a\le N-1,1\le wt_q(a)\le h\} \\
   &&\Leftrightarrow \alpha^a\mbox{ is not a zero of }x^F-1\mbox{ for all }a\in I(q,M,h).
\end{eqnarray*}
But
\begin{eqnarray*}
   & \alpha^a\mbox{ is a zero of }x^F-1&\Leftrightarrow \alpha^{aF}=1 \\
   &&\Leftrightarrow N|aF\mbox{ since the order of } \alpha\mbox{ is }N\\
   &&\Leftrightarrow e|a\mbox{ since }N=eF.
\end{eqnarray*}
Therefore,
\begin{equation}\label{2.1}
    c(x)\in\mho(q,m,h)\Leftrightarrow e\nmid a\mbox{ for all }a\in I(q,m,h)
\end{equation}
 For each $a\in I(q,m,h)$, we know that $1\le a\le q^M-2$, $1\le wt_q(a)\le h$ so that $a$ has the following $q$-adic expansion
 $$a=\sum_{i=0}^{M-1}a_iq^i,\mbox{ }(0\le a_i\le q-1).$$
 If there exists $i\ge m$ such that $a_i\ge1$, let $a'=a-a_iq^i+a_iq^{i-m}$. Then $1\le a'<a$, $a'\equiv a\pmod{n}$ and
 $$wt_q(a')\le wt_q(a-a_iq^i)+wt_q(a_iq^i)\le (h-1)+1=h.$$
 After finite step of this procedure, we get an integer $\widetilde{a}$ such that $1\le \widetilde{a}\le n-1=q^m-2$, $\widetilde{a}\equiv a\pmod{n}$ and $wt_q(\widetilde{a})\le h$. Therefore, $\widetilde{a}\in I(q,m,h)$ and $\widetilde{a}\equiv a\pmod e$, since $e|n$. Then we get that $e\nmid a$ if and only if $e\nmid \widetilde{a}$. This implies that if the condition (*) holds, then the right-hand side of (\ref{2.1}) is true, so that $c(x)\in \mho(q,m,h)$. From $N=eF$ and
 $$c(x)=\frac{x^N-1}{x^F-1}=x^{(e-1)F}+x^{(e-2)F}+\cdots+x^F+1,\mbox{ }\deg(c(x))=(e-1)F<N,$$
 we know that $c(x)$ is a non-zero codeword in $\mho(q,M,h)$ and the Hamming weight is $wt_H(c(x))=e$. Therefore $d(q,M,h)\le e$.

 Next we consider
 $$\overline c(x)=c(x)(x-1)=\frac{(x-1)(x^N-1)}{x^F-1}.$$
 If the condition (*) holds we have proved that $\overline c(\alpha^a)=0$ for all $a\in I(q,m,h)$. Moreover, by similar argument, it can be shown that if the condition (*) holds, then $\overline c(\alpha^a)=0$ for all $a\in I\_(q,m,h)=\{-b:1\le b\le n-1, 1\le wt_q(b)\le h\}$ since $e|a$ if and only if $(-e)|a$. Finally, $\overline c(1)=0$. Therefore, $\overline c(\alpha^a)=0$ for all $a\in \overline I(q,m,h)$ which means that $\overline c(x)\in\overline\mho(q,M,h)$. From $\deg (\overline c(x))=(e-1)F+1<N(=eF)$ we know that $\overline c(x)$ is a non-zero codeword in $\overline \mho(q,m,h)$ and the Hamming weight $wt_H(\overline c(x))\le wt_H(c(x))\cdot wt_H(x-1)=2e$. Therefore, $\overline d(q,M,h)\le 2e$. This completes the proof of Theorem 2.1.

\end{proof}

Remark:

\begin{description}
             \item[(1)] Since $d(q,m,h)\ge \frac{q^{h+1}-1}{q-1}$, any divisor $e$ of $n$ satisfying the condition (*) should be at least $\frac{q^{h+1}-1}{q-1}$. On the other hand, if $e\le 2(q^h-1)$, then Theorem 2.1 presents a better upper bound of $d(q,m,h)$ than the one in Theorem 1.2(1).
             \item[(2)] For any integer $a$ with $a\not\equiv0\pmod{n}$, $e\nmid a$ if and only if $e\nmid aq$ since $\gcd(e,q)=1$. This means that if $a$ and $b$ belong to the same $q$-cyclotomic class, then $e\nmid a$ if and only if $e\nmid b$. Let $\{I_1,\cdots,I_r\}$ be the partition of $I(q,m,h)$ with $I_i$ ($0\le i\le r$) being $q$-cyclotomic classes, $R(q,m,h)=\{a_1,\cdots,a_r\}$ is a representative set of this partition. Then  $e\nmid a$ for all $a\in I(q,m,h)$ is equivalent to the following condition:
                 \begin{description}
                   \item[(R)] $e\nmid a$ for all $a\in R(q,m,h)$.
                 \end{description}

                 Moreover, consider $R(q,m,h)$ as a partial order set with respect to the divisibility order $\preceq$: $a\preceq b$ if and only if $a|b$. Let $M(q,m,h)$ be the set of maximal elements of $(R(q,m,h),\preceq)$. It is easy to see that both of the condition (*) and (R) are equivalent to the following condition
                  \begin{description}
                   \item[(M)]$e\nmid a$ for all $a\in M(q,m,h)$.
                   \end{description}
 \end{description}

\begin{exa} Take $q=3$, $h=2$, $m\ge4$, $n=3^m-1$. Theorem 1.2 gives $13=\frac{3^2-1}{3-1}\le d(3,m,2)\le 2\cdot 3^2-1=17$, $\overline d(3,m,2)\ge26$. For $m=4$, we have
\begin{eqnarray*}
R(3,4,2) &=& \{1,2\}\cup\{a+3b: a,b\in\{1,2\}\}\cup\{a+9b: 1\le b\le a\le2\} \\
         &=& \{1,2,4,5,7,8,10,11,20\},
\end{eqnarray*}
$M(3,4,2)=\{7,8,11,20\}$, $n=3^4-1=80$,  $e=16$ is a divisor of $n$, and $e\nmid a$ for all $a\in M(3,4,2)$. By Theorem 2.1, we get $d(3,4,2)\le16$ and $\overline d(3,4,2)\le32$.

For $m=6$, $n=3^6-1=2^3\cdot 7\cdot13$, we have
\begin{eqnarray*}
  R(3,6,2) &=& R(3,4,2)\cup\{a+27b:1\le b\le a\le2\} \\
           &=& \{1,2,4,5,7,8,10,11,20\}\cup\{28,29,58\}
\end{eqnarray*}
$M(3,6,2)=\{8,11,20,28,58\}$. Take $e=13$, by Theorem 2.1, we get $d(3,6,2)=13$ and $\overline d(3,6,2)=26$.
\end{exa}

Ding et al. raised several open problems in \cite{DLX}. One of them is : Is it true that $d(q,m,h)=\frac{q^h-1}{q-1}$ (the lower bound given in Theorem 1.2)? Theorem 1.2 shows that this is true for four cases: $(q,m,h)=(2,m,h)$, $(q,m,h)=(q, \mbox{even }m,1)$, $(q,m,h)=(q,m,m-1)$, and $(q,m,h)=(3,m,1)$.
As an application of Theorem 2.1, we give the following generalization of the first three cases which show more evidence for this open problem.

\begin{thm}Let $q\ge3$ be a power of a prime number, $m\ge 2$, $1\le h\le m-1$. Then for each $l\ge1$, $d(q,(h+1)l,h)=\frac{q^{h+1}-1}{q-1}$ and $\overline d(q,(h+1)l,h)=\frac{2(q^{h+1}-1)}{q-1}$.
\end{thm}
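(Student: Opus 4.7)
The plan is to apply Theorem 2.1 with its parameter $m$ set equal to $h+1$ and with divisor $e=\frac{q^{h+1}-1}{q-1}$ of $q^{h+1}-1$. The identity $(q-1)e=q^{h+1}-1$ shows $e\mid q^{h+1}-1$; the assumption $q\ge 3$ gives $e<q^{h+1}-1$, and $h\ge 1$ gives $e\ge 1+q\ge 2$, so $2\le e<q^{h+1}-1$, as Theorem 2.1 requires.

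The crux is verifying the condition (*) of Theorem 2.1, namely that no element of $I(q,h+1,h)$ is divisible by $e$. Since $(q-1)e=q^{h+1}-1$, the positive multiples of $e$ that lie in $[1,q^{h+1}-2]$ are exactly $\{ke:1\le k\le q-2\}$, and for each such $k$ the equality $ke=k+kq+kq^2+\cdots+kq^h$ is already the $q$-adic expansion of $ke$ (since $1\le k\le q-1$), so all its $h+1$ digits equal $k\ne 0$. Hence $wt_q(ke)=h+1>h$, which means $ke\notin I(q,h+1,h)$, establishing (*).

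Theorem 2.1 now delivers $d(q,(h+1)l,h)\le e$ and $\overline d(q,(h+1)l,h)\le 2e$ for every $l\ge 1$. The matching lower bound $d\ge e$ is Theorem 1.2(1). For $\overline d\ge 2e$ one may invoke Theorem 1.2(2) when $l\ge 2$ (so that $h\le [\frac{m+1}{2}]$ is automatic, since then $m=(h+1)l\ge 2(h+1)$), or, more uniformly, the BCH bound: writing $m=(h+1)l$ and $n=q^m-1$, the set $\overline I(q,m,h)$ contains $\{1,\dots,e-1\}$ (because $e$ is by construction the smallest positive integer of $q$-weight $h+1$, so every smaller positive integer has $q$-weight at most $h$), its negative $\{n-e+1,\dots,n-1\}$, and $0$ by definition; these $2e-1$ residues are consecutive modulo $n$, so the BCH bound yields $\overline d\ge 2e$.

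The main obstacle is simply to identify a suitable divisor $e$ of $n$ for which condition (*) is verifiable; once one notices that the small multiples of $e=(q^{h+1}-1)/(q-1)$ take the very symmetric form $ke=k(1+q+\cdots+q^h)$, the $q$-weight computation is immediate and everything reduces to a direct appeal to Theorem 2.1.
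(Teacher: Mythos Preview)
Your proof is correct and follows essentially the same approach as the paper: apply Theorem~2.1 with its parameter $m$ set to $h+1$ and $e=(q^{h+1}-1)/(q-1)$, then verify condition~(*) by observing that $ke=k(1+q+\cdots+q^h)$ has $wt_q(ke)=h+1$ for $1\le k\le q-2$. Your BCH-bound argument for $\overline d\ge 2e$ is actually more careful than the paper's bare appeal to Theorem~1.2(2), which as stated only applies when $h\le\lfloor(m+1)/2\rfloor$ and so does not literally cover the case $l=1$, $h\ge 3$.
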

\begin{proof} Take $e=\frac{q^{h+1}-1}{q-1}$, $m=h+1$ in Theorem 2.1. Then $n=ef$, $f=q-1$. For $1\le t\le f-1$,
$$wt_q(te)=wt_q(\sum_{i=0}^htq^i)=h+1>h.$$
Therefore, $e$ is not a divisor of any number $a$ in $I(q,m,h)$ since $wt_q(a)\le h$. By Theorem 2.1 and 1.2 we get $d(q,(h+1)l,h)=e=\frac{q^{h+1}-1}{q-1}$ and $\overline d(q,(h+1)l,h)=2e$.
\end{proof}

\section{$h=1$ case}

In this section we deal with $h=1$ case more precisely. From the first two sections we know that
\begin{description}
  \item[(I)] For $q\ge3$ and $m\ge3$, $q+1\le d(q,m,1)\le 2q-1$, $\overline d(q,m,1)\ge2(q+1)$.
  \item[(II)] For $m\ge4$, $d(2,m,1)=3$ and $\overline d(2,m,1)\ge6$ for $m\ge3$, $d(3,m,1)=4$ and $\overline d(3,m,1)\ge8$.
  \item[(III)] For $q\ge3$ and even number $m\ge4$, $d(q,m,1)=q+1$ and $\overline d(q,m,1)=2(q+1)$.
\end{description}

Now we present an upper bound of $\overline d(q,m,1)$. Firstly we consider $q=2$ case.

\begin{thm}\label{3.1}
For all $m\ge4$, $\overline d(2,m,1)=6$ and the code $\overline\mho(2,m,1)$ is distance optimal.(Remark that $\overline\mho(2,3,1)=\{0\}$).
\end{thm}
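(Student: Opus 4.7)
The plan is to sandwich $\overline{d}(2,m,1)$ between a known lower bound and a sphere-packing upper bound; the latter will simultaneously deliver the distance-optimality claim. The lower bound $\overline{d}(2,m,1)\ge 6$ is already available: it is Theorem 1.2(2) specialised to $(q,h)=(2,1)$ (the hypothesis $h\le\lfloor(m+1)/2\rfloor$ is trivial here), and it is recorded in item (II) at the start of this section. Thus the whole work reduces to proving the matching upper bound $\overline{d}(2,m,1)\le 6$.

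For the upper bound I would apply the Hamming (sphere-packing) bound to the code itself. Since $h=1\le\lfloor(m-1)/2\rfloor$ for every $m\ge 3$, the dimension formula recalled in Section 1 gives $\overline{k}(2,m,1)=n-1-2m$ with $n=2^{m}-1$. If the minimum distance were at least $7=2\cdot 3+1$, the closed Hamming balls of radius $3$ around distinct codewords would be pairwise disjoint, forcing
\[
2^{\overline{k}}\cdot V(3)\le 2^{n},\qquad V(3):=\sum_{i=0}^{3}\binom{n}{i},
\]
which is equivalent to $V(3)\le 2^{n-\overline{k}}=2^{2m+1}$.

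The entire proof therefore reduces to the elementary numerical inequality $V(3)>2^{2m+1}$ for every $m\ge 4$. The base case $m=4$ is the tightest one: here $V(3)=1+15+105+455=576>512=2^{9}$. For $m\ge 5$ the cubic term $\binom{2^{m}-1}{3}$ alone is of order roughly $2^{3m}/6$, which already dwarfs $2^{2m+1}$, so the inequality holds with a great deal of slack. I do not anticipate any genuine obstacle; the only step that demands a moment of care is the $m=4$ verification, and the direct computation above settles it.

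Finally, the very same sphere-packing inequality shows that no binary linear $[n,\overline{k}]$ code can achieve minimum distance $\ge 7$, which is precisely the statement that $\overline{\mho}(2,m,1)$ is distance-optimal with respect to the sphere-packing bound. Thus the two conclusions of the theorem will follow together from one and the same argument.
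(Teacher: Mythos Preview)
Your proposal is correct and follows essentially the same route as the paper: invoke the known lower bound $\overline d(2,m,1)\ge 6$, compute $\overline k=n-1-2m$, and rule out $d\ge 7$ via the sphere-packing inequality $\sum_{i=0}^{3}\binom{n}{i}\le 2^{2m+1}$, which also yields distance-optimality. The only cosmetic difference is in how the numerical inequality is dispatched---the paper rewrites it as a cubic in $n$ and checks positivity of $f$ and its derivatives at $n=15$, whereas you verify $m=4$ directly and appeal to the dominant $\binom{n}{3}$ term for larger $m$.
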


\begin{proof}
We know that $\overline d(2,m,1)\ge6$. The parameters of binary code $\overline\mho(2,m,1)$ is $[n,k,d]$ where $n=2^m-1$, $k=dim_{\mathbb{F}_2}\overline\mho(2,m,1)=n-|I\_(2,m,1)|=n-(1+2m)$. If $d(=\overline d(2,m,1))\ge7$, the sphere-packing bound gives
\begin{equation}\label{3.1}
    2(n+1)^2=2^{1+2m}=2^{n-k}\ge\sum_{i=0}^3\dbinom{n}{i}=1+n+\frac12n(n-1)+\frac16n(n-1)(n-2),
\end{equation}
which is equivalent to $n^3-12n^2-19n-6\le0$. Let $f(x)=x^3-12x^2-19x-6$. It can be checked that $f(15)=f'(15)=f''(15)=f'''(15)>0$. This implies that for $m\ge4$, $n=2^m-1\ge15$ and $f(n)>0$ which contradicts to the inequality (\ref{3.1}) given by the sphere-packing bound. Therefore $\overline d(2,m,1)=6$ and the codes $\overline\mho(2,m,1)$ for $m\ge4$ are distance-optimal.

\end{proof}

The sphere-packing can also be used to obtain an upper bound of $\overline d(q,m,1)$ for $q\ge3$.

\begin{thm}\label{3.2}
For any fixed $q\ge3$, there exists a constant $c=c(q)>0$ such that $\overline d(q,m,1)\le2(2q-1)$ for all odd integers $m>c$.
\end{thm}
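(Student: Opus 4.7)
The plan is to mimic the sphere-packing argument used in Theorem~\ref{3.1}, but with $q \ge 3$ and a larger packing radius. Suppose, for contradiction, that $\overline{d}(q,m,1) \ge 4q-1 = 2(2q-1)+1$. Since $h=1 \le (m-1)/2$ for any odd $m \ge 3$, the dimension formula stated in the introduction gives $\overline{k}(q,m,1) = n - 1 - 2(q-1)m$, where $n = q^m-1$. Packing disjoint Hamming balls of radius $t = 2q-1$ around the $q^{\overline{k}}$ codewords of $\overline{\mho}(q,m,1)$ inside $\mathbb{F}_q^n$ forces the sphere-packing bound
\begin{equation}\label{eq:sp-h1}
\sum_{i=0}^{2q-1}\binom{n}{i}(q-1)^i \;\le\; q^{\,n-\overline{k}} \;=\; q^{1+2(q-1)m}.
\end{equation}
The proof will be complete once I show that \eqref{eq:sp-h1} fails for all sufficiently large $m$ (depending on $q$ only).

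Next, I would isolate the dominant term on the left of \eqref{eq:sp-h1}, namely $\binom{n}{2q-1}(q-1)^{2q-1}$, and compare it with the right-hand side as a function of $m$. Using $\binom{n}{2q-1} \ge \frac{(n - 2q + 2)^{2q-1}}{(2q-1)!}$, once $m$ is so large that $n - 2q + 2 \ge q^m/2$ (any $q^m \ge 4q$ will do), one gets
$$\binom{n}{2q-1}(q-1)^{2q-1} \;\ge\; \frac{(q-1)^{2q-1}}{2^{\,2q-1}(2q-1)!}\, q^{m(2q-1)}.$$
Dividing by the right-hand side $q^{1+2(q-1)m}$ of \eqref{eq:sp-h1} leaves the factor
$$\frac{(q-1)^{2q-1}}{2^{\,2q-1}(2q-1)!}\, q^{\,m(2q-1) - 1 - 2(q-1)m}\;=\; \frac{(q-1)^{2q-1}}{2^{\,2q-1}(2q-1)!}\, q^{\,m-1},$$
which tends to infinity with $m$ since $q \ge 3$ and the prefactor depends only on $q$.

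Hence, if I choose $c(q)$ large enough so that simultaneously $q^{c(q)} \ge 4q$ and $q^{\,c(q)-1}(q-1)^{2q-1} > 2^{\,2q-1}(2q-1)!$, then for every odd $m > c(q)$ the leading term of the left-hand side of \eqref{eq:sp-h1} by itself already exceeds the right-hand side, contradicting the sphere-packing bound. This forces $\overline{d}(q,m,1) \le 4q-2 = 2(2q-1)$, as claimed.

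The argument is conceptually the same as Theorem~\ref{3.1} and poses no real obstacle; what needs a little care is only the clean lower bound on $\binom{n}{2q-1}$ valid for all large enough $m$, and the verification that the remaining terms $\binom{n}{i}(q-1)^i$ for $i < 2q-1$ do not spoil the asymptotic comparison (they are smaller by a factor of at least $(n/(2q-1))^{-1}$ each, hence negligible in front of the top term for large $n$). Everything else reduces to a routine asymptotic estimate in $m$ with $q$ fixed.
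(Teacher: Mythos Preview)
Your proposal is correct and follows essentially the same sphere-packing argument as the paper: assume $\overline d(q,m,1)\ge 2(2q-1)+1$, write down the sphere-packing inequality $q^{1+2(q-1)m}\ge\sum_{i=0}^{2q-1}\binom{n}{i}(q-1)^i$, and observe that the top term on the right grows like $q^{(2q-1)m}$ while the left grows like $q^{(2q-2)m}$, a contradiction for large $m$. Your version is in fact more explicit than the paper's (you give a concrete lower bound on $\binom{n}{2q-1}$ and an explicit shape for $c(q)$, whereas the paper just invokes the asymptotic $M(q)q^{(2q-1)m}+O(q^{(2q-2)m})$); the final remark about the lower-order terms is unnecessary, since the leading term alone already dominates the right-hand side and all summands are nonnegative.
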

\begin{proof}The parameters of the cyclic code $\overline\mho(q,m,1)$ is $[n,k,d]_q$, where
$$n=q^m-1\mbox{ , }k=n-|I\_(q,m,1)|=n-(1+2(q-1)m)\mbox{ , }d=\overline d(q,m,1).$$
Suppose that $\overline d(q,m,1)\ge2(2q-1)+1$. The sphere-packing bound gives
\begin{equation}\label{2.3}
    q^{1+2(q-1)m}=q^{n-k}\ge\sum_{i=0}^{2q-1}(q-1)^i\dbinom{n}{i}.
\end{equation}
The last term of the right hand side is
$$(q-1)^{2q-1}\dbinom{q^m-1}{2q-1}=M(q)q^{(2q-1)m}+O(q)q^{(2q-2)m}\mbox{ , }M(q)=\frac{(q-1)^{2q-1}}{(2q-1)!}>0.$$
When $q$ is fixed and $m\rightarrow \infty$, we know that the equality (\ref{2.3}) cannot hold for sufficient large $m$ since $2(q-1)<2q-1$. This completes the proof of Theorem 3.2.
\end{proof}

By more careful estimation, it would be obtained an explicit value of $c(q)$. The case $q=3$ is easy.

\begin{thm}\label{3.3}
For any odd integer $m\ge3$, $\overline d(3,m,1)\le10$.
\end{thm}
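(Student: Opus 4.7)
My plan is to apply the sphere-packing bound, following the strategy of Theorems \ref{3.1} and \ref{3.2}, specialised to $q=3$ with hypothetical covering radius $t=5$, and to discharge the resulting inequality by one numerical check at $m=3$ plus a short monotonicity step.

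First I would record the parameters of the code. For $m\ge 3$ we have $h=1\le \lfloor(m-1)/2\rfloor$, so $I(3,m,1)\cap I\_(3,m,1)=\emptyset$ and $|\overline I(3,m,1)|=1+4m$. Thus $\overline\mho(3,m,1)$ is an $[n,k,d]_3$ cyclic code with $n=3^m-1$, $k=n-1-4m$, and $d=\overline d(3,m,1)$.

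Next I would assume for contradiction that $\overline d(3,m,1)\ge 11$, so that $t=\lfloor(d-1)/2\rfloor\ge 5$ and the sphere-packing bound forces
$$3^{1+4m}=3^{n-k}\ge \sum_{i=0}^5 2^i\binom{n}{i}. \qquad (\ast)$$
To derive a contradiction, I would compare only the $i=5$ term on the right with the left-hand side, i.e.\ try to show
$$\rho(m):=\frac{32\binom{3^m-1}{5}}{3^{1+4m}}>1\quad\text{for every }m\ge 3.$$

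For the base case $m=3$, a direct calculation gives $\rho(3)=\tfrac{32\cdot 65780}{3^{13}}=\tfrac{2104960}{1594323}>1$, which already violates $(\ast)$ at $m=3$. For the induction step I would use $n_{m+1}=3n_m+2$, so $n_{m+1}-j\ge 3(n_m-j)$ for $0\le j\le 4$ (equivalent to the one-line check $2\ge -2j$); this gives $\binom{n_{m+1}}{5}/\binom{n_m}{5}\ge 3^5=243$ while $3^{4(m+1)}/3^{4m}=81$, hence $\rho(m+1)\ge 3\rho(m)$. Therefore $\rho(m)>1$ for all $m\ge 3$, $(\ast)$ fails, and $\overline d(3,m,1)\le 10$. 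The main (mild) obstacle is simply the single numerical verification $\rho(3)>1$; once that is in hand, the monotonicity step is essentially one line, so Theorem \ref{3.3} follows for every $m\ge 3$ (and in particular for every odd $m\ge 3$).
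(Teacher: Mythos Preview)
Your proof is correct and follows essentially the same approach as the paper: both assume $\overline d(3,m,1)\ge 11$, invoke the sphere-packing bound $3^{1+4m}\ge\sum_{i=0}^{5}2^{i}\binom{n}{i}$, and derive a contradiction for all $m\ge 3$. The only difference is in discharging the inequality: the paper expands the full sum into the quintic $4n^{5}-75n^{4}-80n^{3}-390n^{2}-104n-30$ and shows it is positive for $n\ge 26$ by factoring out $n^{5}$, whereas you keep only the $i=5$ term and use the clean recursion $\rho(m+1)\ge 3\rho(m)$ together with the single check $\rho(3)=2104960/1594323>1$; your route avoids the polynomial bookkeeping and is a bit tidier.
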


\begin{proof}
If $\overline d(3,m,1)\ge11$, the sphere-packing bound gives
\begin{equation}\label{2.4}
3(n+1)^4=3^{4m+1}\le\sum_{\lambda=0}^52^{\lambda}\dbinom{n}{\lambda}\mbox{ (}n=3^m-1\mbox{ )}
\end{equation}
which, by an elementary computation, is equivalent to
\begin{equation}\label{2.5}
4n^5-75n^4-80n^3-390n^2-104n-30\le0.
\end{equation}
But when $m\ge3$, $n\ge26$ and the left-hand side of (\ref{2.5}) is
\begin{eqnarray*}
   && n^5(4-\frac{75}n-\frac{80}{n^2}-\frac{390}{n^3}-\frac{104}{n^4}-\frac{30}{n^5}) \\
   &\ge& n^5(4-\frac{75}{26}-\frac{80}{26^2}-\frac{390}{26^3}-\frac{104}{26^4}-\frac{30}{26^5}) >n^5(4-3.3)>0.
\end{eqnarray*}
Therefore, the equality (\ref{2.4}) does not hold for any odd number $m\ge3$. This completes the proof of Theorem 3.3.
\end{proof}

Next, we show that the upper bound $d(q,m,1)\le 2q-1$ can be improved in many cases by using Theorem 2.1 and a remarkable fact: $R(q,m,1)=\{1,2,\cdots,q-1\}$ is independent of $m$. At the same cases we also present an upper bound of $\overline d(q,m,1)$ which is smaller than $2(2q-1)$. From $d(q,m,1)\ge q+1$ we know that $d(q,m,1)=q+a$, where $1\le a\le q-1$.

\begin{thm}\label{3.4}
Let $q$ be a power of a prime number.
\begin{description}
  \item[(1)] Let $1\le a\le q-2$, $e=q+a$ and $\gcd(a,q)=1$. If the order of $(-a)$ modulo $e$ is an odd integer $l$, then $d(q,\lambda l,1)\le e$ and $\overline d(q,\lambda l,1)\le2e$ for all odd integer $\lambda\ge1$.
  \item[(2)] Let $m$ be an odd positive integer. If $e$ is a divisor of $n=q^m-1$ and $q+1\le e\le 2q-1$, then $d(q,m,1)\le e$ and $\overline d(q,m,1)\le 2e$.
\end{description}
\end{thm}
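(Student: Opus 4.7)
The plan is to deduce both parts from Theorem 2.1 once we pin down $R(q,m,1)$ in the $h=1$ case. Note that $I(q,m,1)$ is precisely $\{cq^i:1\le c\le q-1,\ 0\le i\le m-1\}$, and since multiplying by $q$ cyclically shifts the $q$-adic digits, every element $cq^i$ lies in the same $q$-cyclotomic class as $c$. Hence $R(q,m,1)=\{1,2,\ldots,q-1\}$, a set that is independent of $m$. This observation is what makes the $h=1$ situation so rigid and is the key technical input.

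For part (2) the plan is to apply Theorem 2.1 directly, with the given odd $m$, $h=1$, and $l=1$. The condition to check is the equivalent reformulation (R)/(M) from the Remark after Theorem 2.1, namely that $e\nmid a$ for every $a\in R(q,m,1)$. Since $e\ge q+1$ and every $a\in\{1,\ldots,q-1\}$ satisfies $a\le q-1<e$, the divisibility fails trivially. The upper bound $e\le 2q-1<n=q^m-1$ (using $m\ge 3$, which is forced by the existence of such a divisor $e$ of $q^m-1$) ensures $2\le e<n$, so Theorem 2.1 applies and yields $d(q,m,1)\le e$ and $\overline d(q,m,1)\le 2e$.

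For part (1) the plan is to reduce to part (2) applied with $M:=\lambda l$ in place of $m$. I need three verifications. First, $M=\lambda l$ is odd because both factors are odd by hypothesis. Second, $e=q+a$ satisfies $q+1\le e\le 2q-2\le 2q-1$ because $1\le a\le q-2$. Third, and this is the only place where the hypothesis on the order of $-a$ is used: since $e=q+a$ we have $q\equiv -a\pmod{e}$, and $\gcd(e,q)=\gcd(q+a,q)=\gcd(a,q)=1$, so $q$ is a unit modulo $e$. Therefore the multiplicative order of $q$ modulo $e$ equals the order of $-a$ modulo $e$, which is $l$. Hence $q^l\equiv 1\pmod{e}$, and a fortiori $q^M\equiv 1\pmod{e}$, i.e.\ $e\mid q^M-1$. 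Applying part (2) to $(e,M)$ then delivers $d(q,\lambda l,1)\le e$ and $\overline d(q,\lambda l,1)\le 2e$.

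There is no real obstacle: once the clean description $R(q,m,1)=\{1,\ldots,q-1\}$ is in hand, both statements are essentially packagings of Theorem 2.1. The only substantive move is the translation from $\mathrm{ord}_e(-a)=l$ to $e\mid q^l-1$, which is immediate from $q\equiv -a\pmod{e}$ together with $\gcd(a,q)=1$. The odd parity of $m$ (and of $l,\lambda$) is not used in the divisibility arguments themselves; its role is simply to keep us in the regime of Theorem 2.1 where the bound is produced and to ensure that the divisor $e$ with $q+1\le e\le 2q-1$ can actually occur as a divisor of $q^m-1$ (for even $m$ one has the sharper Theorem 2.2 already).
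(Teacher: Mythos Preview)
Your proof is correct and uses the same ingredients as the paper: the identification $R(q,m,1)=\{1,\ldots,q-1\}$, the trivial check that $e\ge q+1$ cannot divide any element of this set, and the translation $q\equiv -a\pmod{e}$ turning the order hypothesis into $e\mid q^l-1$. The only difference is organizational: the paper proves (1) directly from Theorem~2.1 and then deduces (2) from (1) (using that $l\mid m$ with $m$ odd forces $l$ odd), whereas you prove (2) directly and deduce (1) from (2); both routes are equally short.
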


\begin{proof}\begin{description}
  \item[(1)] By assumption and $q\equiv-a\pmod{e}$ we know that the order of $q$ modulo $e$ is $l$ and then $e|q^l-1$. From $e=q+a>q-1$ we know that $e\nmid x$ for any $x\in R(q,m,1)=\{1,2,\cdots,q-1\}$. Then the conclusion is derived from Theorem 2.1 and Remark (2) after Theorem 2.1.
    \item[(2)] It is a direct consequence of (1). The order  $l$ of $q$ modulo $e$ is a divisor of $m$, therefore $l$ is odd and $m=l\lambda$, $\lambda\in \mathbb{Z}$.
\end{description}
\end{proof}

\begin{rem}\begin{description}
             \item[(1)] For $a=1$, the order of $-1$ modulo $q+1$ is two. We obtain the previous results: for even number $m\ge2$, $d(q,m,1)\le q+1$ and $\overline d(q,m,1)\le2(q+1)$.
             \item[(2)] The following facts in elementary number theory may be helpful to judge if the order $b$ modulo $e$ is odd for $b,e\in\mathbb{Z}$, $e\ge2$ and $\gcd(b,e)=1$. We denote by $O_e(b)$ the order of $b$ modulo $e$. Namely, $O_e(b)$ is the least integer $l\ge1$ such that $b^l\equiv1\pmod{e}$.
                 \begin{description}
                   \item[(F1)] $O_e(b)|\phi(e)$, where $\phi(e)$ is the Euler function defined by
                   $$\phi(e)=\sharp\{i:1\le i\le e,\gcd(i,e)=1\}.$$
                   \item[(F2)] Let $e=p_1^{a_1}\cdots p_s^{a_s}$, where $p_1,\cdots,p_s$ are distinct prime numbers, $a_i\ge1$, $1\le i\le s$. Then $O_e(b)$ is odd if and only if $O_{{p_i}^{a_i}}(b)$ is odd for all $i$ ($1\le i\le s$) since $O_e(b)=\lcm\{O_{{p_i}^{a_i}}(b):1\le i\le s\}$.
                   \item[(F3)] For $e=2^a(a\ge1)$, $\phi(e)=2^{a-1}$. Then $O_e(b)$ is odd if and only if $O_e(b)=1$, which means that $b\equiv1\pmod{e}$. For $e=p^a$, where $p$ is an odd prime and $a\ge1$, $\phi(e)=p^{a-1}(p-1)$. We know that if $p\nmid b$, then
                       $$O_{p^2}(b)=O_p(b)\mbox{ or }O_p(b)p\mbox{ , }O_{p^a}(b)=O_{p^2}(b)\cdot p^{a-2}\mbox{ (for } a\ge3\mbox{ )}.$$
                       Therefore, $O_p(b)$ is odd for $a\ge2$ if and only if $O_p(b)$ is odd.
                   \item[(F4)] For an odd prime number $p$, let
                       $$\phi(p)=p-1=2^aN,$$
                       where $N$ is odd, $a\ge1$. Then for $p\nmid b$,
                       $$O_p(b)\mbox{ is odd }\Leftrightarrow O_p(b)|N\Leftrightarrow b\mbox{ is a }2^a-th\mbox{ power modulo }p.$$
                   Particularly, if $p\equiv3\pmod4$ and $p\nmid b$, $O_p(b)$ is odd if and only if $b$ is a square (quadratic residue) modulo $p$ (which can be determined by the quadratic reciprocity law).
                 \end{description}
                 There exist several elementary criteria to judge if $b$ is a $4$-th and $8$-th  power modulo $p$ for smaller $|b|$ (see \cite{BEW}, Chapter 7).
                 \end{description}
\end{rem}

\begin{exa}
Take $q=25$. We know that for $m\ge2$, $q+1=26\le d(25,m,1)\le 49=2q-1$, $\overline d(25,m,1)\ge52$. Moreover, $\overline d(25,m,1)\le98$ for sufficient large $m$, and for all $\lambda\ge1$,
$$d(25,2\lambda,1)=26\mbox{ , }\overline d(25,2\lambda,1)=52.$$
\end{exa}

Now we use Theorem \ref{3.4}(1). In the following, $\lambda$ denotes any odd positive integer.

For $a=2$, $e=q+a=27$, $O_e(-2)=9$. Therefore $d(25,9\lambda,1)\le27$ and $\overline d(25,9\lambda,1)\le 54$.

For $a=3$, $e=28$, $O_e(-3)=3$. Therefore $d(25,3\lambda,1)\le28$ and $\overline d(25,3\lambda,1)\le 56$.

For $a=4$, $e=29$, $O_e(-4)=7$. Therefore $d(25,7\lambda,1)\le29$ and $\overline d(25,7\lambda,1)\le 58$.

For $a=8$, $e=33$, $O_e(-8)=5$. Therefore $d(25,5\lambda,1)\le33$ and $\overline d(25,5\lambda,1)\le 66$.

For $a=22$, $e=47$, $O_e(-22)=23$. Therefore $d(25,23\lambda,1)\le47$ and $\overline d(25,23\lambda,1)\le 94$.

For $7\le q\le 32$, the following tables presents all integer $a$ such that $2\le a\le q-2$ and $l=O_{q+a}(-a)$ is odd. Theorem \ref{3.4}(1) implies that for all odd $\lambda\ge1$, $d(q,l\lambda,1)\le q+a$ and $\overline d(q,l\lambda,1)\le 2(q+a)$. We also list the general lower bound $q+1$ and upper bound $2q-1$ of $d(q,m,1)$.

\begin{center}
Table I
\begin{tabular}{|c|c|c|c|cc|ccccc|cc|c|ccccc|}
    \hline
    $q$         &7  &9  &11  &13 &    &16  &   &   &   &    &19  &    &23  &25  &   &   &   &  \\ \hline
    $a$         &2  &2  &3  &5 &10  &3  &5  &7  &9  &13  &4  &8   &6  &2  &3  &4  &8  &22 \\
    $l=O_e(-a)$ &3  &5  &3  &3 &11  &9  &3  &11 &5  &7   &11 &3   &7  &9  &3  &7  &5  &23  \\
    $e=q+a$     &9  &11 &14 &18 &23 &19 &21 &23 &25 &29  &23 &27  &29 &27 &28 &29 &33 &47 \\
    $q+1$       &8  &10 &12 &14 &   &17  &   &   &   &   &20 &    &24 &26  &   &   &   &  \\
    $2q-1$      &13 &17 &21 &25 &  &31  &   &   &   &   & 37 &    &45 & 49  &   &   &   &  \\
    \hline
  \end{tabular}
\end{center}

\begin{center}
Table II

\begin{tabular}{|c|cc|ccc|cccc|cc|}
    \hline
    $q$         &27 &   &29  &   &   &31 &    &   &   &32 & \\ \hline
    $a$         &19 &20 &17 &20 &23 &2 &12  &14 &15 &15 &17 \\
    $l=O_e(-a)$ &11 &23 &11 &7  &3  &5 &21  &3  &11 &23 &21  \\
    $e=q+a$     &46 &47 &46 &49 &52 &33 &43 &45 &46 &47 &49 \\
    $q+1$       &28 &   &30 &   &   &32 &   &  &   &33 &  \\
    $2q-1$      &53 &   &57 &   &   &61 &   &  &   &63 &  \\
    \hline
  \end{tabular}
\end{center}

 \section{conclusion}
 This paper present several new results on the minimum distance $d(q,m,h)$ and $\overline d(q,m,h)$ of the cyclic codes $\mho(q,m,h)$ and $\overline\mho(q,m,h)$ introduced in \cite{DLX}. The results are summarized in Introduction (Section 1, (I)-(IV)). Particularly,
 \begin{enumerate}
   \item We find that the codes $\overline\mho(2,m,1)$ for all $m\ge4$ with parameters $[2^m-1,2^m-2m-2,6]$ are distance-optimal.
   \item We generalize a result given in \cite{DLX} to prove that $d(q,(h+1)\lambda,h)=\frac{q^{h+1}-1}{q-1}$ and $\overline d(q,(h+1)\lambda,h)=\frac{2(q^{h+1}-1)}{q-1}$ for any $\lambda\ge1$ and $q\ge3$.
   \item By using the sphere-packing bound, we show that for any fixed $q$, $\overline d(q,m,1)\le2(2q-1)$ for sufficient large $m(\ge c(q))$. And for $q=3$, $\overline d(3,m,1)\le10$ for all $m\ge2$.
   \item By using codeword $c(x)=\frac{x^n-1}{x^f-1}$ in $\mho(q,m,h)$ and $\overline{c}(x)=(x-1)c(x)$ in $\overline\mho(q,m,h)$, we get $d(q,m,h)\le e$ and $\overline d(q,m,h)\le 2e$ for some factor $e\ge1$ of $n=q^m-1$ ($f=n/e$). In many cases these upper bounds are better than $d(q,m,h)\le 2q-1$ and $\overline d(q,m,h)\le2(2q-1)$.
 \end{enumerate}

 Ding et al. \cite{DLX} raised several open problems on codes $\mho(q,m,h)$ and $\overline\mho(q,m,h)$. Focus on their minimum distance, we add the following problems.
 \begin{description}
   \item[(1)] All examples show that $\overline d(q,m,h)\le 2d(q,m,h)$ and $\overline d(q,m,h)\le 2(2q^h-1)$. Are these always true? Remark that $d(q,m,h)\le2q^h-1$. Thus $\overline d(q,m,h)\le2d(q,m,h)$  implies $\overline d(q,m,h)\le2(2q^h-1)$.
   \item[(2)] All examples show that $\overline d(q,m,h)=\frac{2(q^{h+1}-1)}{q-1}$ provided $1\le h\le[\frac{m-1}2]$ for $q\ge3$ and $1\le h\le[\frac{m-2}2]$ for $q=2$. Is this always true?
   \item[(3)] Find new method to improve the upper bound of $d(q,m,h)$ and/ or $\overline d(q,m,h)$. Find non-zero codewords $c(x)\in\mho(q,m,h)$ ($\overline c(x)\in\overline \mho(q,m,h)$) having small Hamming weight and with other form than $\frac{x^n-1}{x^f-1}$ ($\frac{(x-1)(x^n-1)}{x^f-1}$), then we get $d(q,m,h)\le wt_H(c(x))$ ($\overline d(q,m,h)\le wt_H(\overline c(x))$).
 \end{description}

 \newpage

\section*{References}

\end{document}